\tikzset{diamond state/.style={draw,diamond}}
\newtheoremstyle{theoremdd}
  {\topsep}
  {\topsep}
  {\itshape}
  {0pt}
  {\bfseries}
  {.}
  { }
  {\thmname{#1}\thmnumber{ #2}\textnormal{\thmnote{ (#3)}}}
\theoremstyle{theoremdd}
\newtheorem{theorem}{Theorem}
\newtheorem{lemma}{Lemma}
\newtheorem{assumption}{Assumption}
\newtheorem{remark}{Remark}
\newcommand{\bm}[1]{\boldsymbol{#1}} 
\newcommand{\set}[1]{\mathcal{#1}} 
\newcommand{\ie}{\textit{i.e.,~}} 
\newcommand{\eg}{\textit{e.g.,~}} 
\newcommand{\inneighbor}[1]{\set{N}_{#1}^{\texttt{in}}}
\newcommand{\outneighbor}[1]{\set{N}_{#1}^{\texttt{out}}}
\newcommand{\indegree}[1]{d_{#1}^{\texttt{in}}}
\newcommand{\outdegree}[1]{d_{#1}^{\texttt{out}}}
\newcommand{\ouralgorithm}{{\fontsize{10.5pt}{10.5pt}\selectfont \textsc{pp-acdc}}}
\newenvironment{list4}{
\begin{list}{$\bullet$}{%
    \setlength{\itemsep}{0.05cm}
    \setlength{\labelsep}{0.2cm}
    \setlength{\labelwidth}{0.3cm}
    \setlength{\parsep}{0in} 
    \setlength{\parskip}{0in}
    \setlength{\topsep}{0in} 
    \setlength{\partopsep}{0in}
    \setlength{\leftmargin}{0.18in}}}
{\end{list}}
\begin{document}
\title{Average Consensus with Dynamic Quantization Framing and Finite-Time Termination over Limited-Bandwidth Directed Networks}

\author{
  Evagoras Makridis\\
  Department of Electrical and Computer Engineering\\
  University of Cyprus\\
   Nicosia, Cyprus\\
  \texttt{makridis.evagoras@ucy.ac.cy} \\
   \And
  Gabriele Oliva \\
  Department of Engineering\\
  University Campus Bio-Medico of Rome\\
  Roma, Italy \\
  \texttt{g.oliva@unicampus.it} \\
  \And
  Apostolos I. Rikos \\
  Artificial Intelligence Thrust of the Information Hub\\
  The Hong Kong University of Science and Technology\\
  Guangzhou, China \\
  \texttt{apostolosr@hkust-gz.edu.cn} \\
  \And
  Themistoklis Charalambous\\
  Department of Electrical and Computer Engineering\\
  University of Cyprus\\
   Nicosia, Cyprus\\
  \texttt{charalambous.themistoklis@ucy.ac.cy}
}

\maketitle

\begin{abstract}
This paper proposes a deterministic distributed algorithm, referred to as {\ouralgorithm}, that achieves exact average consensus over possibly unbalanced directed graphs using only a fixed and \emph{a~priori} specified number of quantization bits. The method integrates Push–Pull (surplus) consensus dynamics with a dynamic quantization framing scheme combining zooming and midpoint shifting, enabling agents to preserve the true global average while progressively refining their quantization precision. We establish a rigorous convergence theory showing that {\ouralgorithm} achieves asymptotic (exact) average consensus on any strongly connected digraph under appropriately chosen quantization parameters. Moreover, we develop a fully distributed and synchronized finite-time termination mechanism, and we provide a formal proof on the detection of $\epsilon$-convergence to the average within a finite number of iterations. Numerical simulations corroborate the theoretical results and demonstrate that {\ouralgorithm} achieves reliable, communication-efficient, and precise average consensus even under very tight bit budgets, underscoring its suitability for large-scale and resource-constrained multi-agent systems operating over directed networks.
\end{abstract}

\keywords{quantized average consensus, finite-rate communication, dynamic framing, directed graphs, surplus consensus, finite-time termination}

\section{Introduction}\label{sec:introduction}

Distributed consensus algorithms are essential components of modern applications that involve the cooperation of a group of networked agents (often referred to as \emph{multi-agent systems}) such as robotic networks, distributed power networks, and wireless sensor networks. Some of these applications include distributed optimization \cite{ carnevale2023distributed,maritan2024fully}, distributed estimation and control \cite{makridis2024fully, fioravanti2024distributed}, and distributed learning \cite{bastianello2024robust}. In such problems, the agents in a network aim at reaching agreement on a common decision by iteratively updating their values based on information received by their immediate neighbors via information exchange. When the agreed common value is the average of the initial values of all agents in the network, we say that the agents reach \emph{average consensus}. Although there exist several average consensus schemes considering network abnormalities (such as packet delays \cite{hadjicostis2013average,charalambous2024robustified} or drops \cite{hadjicostis2015robust,makridis2023harnessing}) and malfunctioning agents (e.g., curious, malicious, or faulty agents)~\cite{CHARALAMBOUS2024}, they require agents to be able to send and receive real values with infinite precision.

In multi-agent systems where the communication is often established over wireless channels, several constraints, such as limited bandwidth, power, and memory, arise. Such limitations often require that the information that is exchanged between agents is quantized. Quantization enables information compression such that a number in a continuous set (\ie real-valued number $x\in\mathbb{R}$), can be mapped to a value in a finite set. This mapping is done with a quantizer, denoted by $Q(\cdot)$. In this work, we define the quantizer given a positive integer number of bits $b<\infty$ available for quantization, the step-size $\Delta$ which defines the spacing between the quantization intervals, and the quantizer's symmetry point (or midpoint) $\sigma$ (which is not necessarily at $0$) as follows:
\begin{align}\label{eq:quantizer}
    Q\big(x,b,\Delta,\sigma\big) &= \begin{cases}
        \hfil \sigma + \Delta(2^{b-1} -1), & \text{if}~x>\overline{x},\\
        \hfil \sigma - \Delta(2^{b-1} -1), & \text{if}~x\leq \bar{x},\\
        \hfil \sigma + \Delta \bigl \lfloor \frac{x-\sigma}{\Delta} +\frac{1}{2} \bigr \rfloor, & \text{otherwise},
    \end{cases}
\end{align} 
where the upper and lower input quantization limits are given respectively by
\begin{subequations}
\begin{align}
    \overline{x} := \sigma + \Delta\big(2^{b-1} - 0.5\big),\\
    \underline{x} := \sigma - \Delta\big(2^{b-1} - 0.5\big).
\end{align}
\end{subequations}
An illustrative example of the input-to-output relation of a uniform quantizer with $b=3$, $\Delta=1$, and $\sigma=0$, is shown in Fig.~\ref{fig:quantizer}.
\begin{figure}[t]
    \centering
    \includegraphics[width=0.4\linewidth]{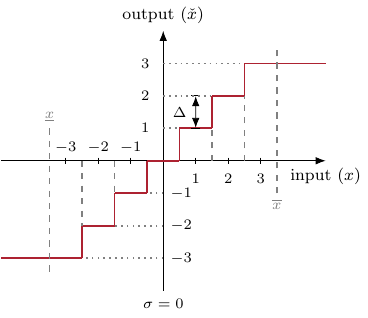}
    \caption{A uniform quantizer of $b=3$ bits with $\sigma=0$ and $\Delta=1$.}
\end{figure}
Although quantization reduces the precision of the original information, it enables mapping to a finite set of discrete values that are often representable using bits. This mapping is necessary for transmissions over channels with limited bandwidth and transmitters with power limitations. Bandwidth limits the number of symbols sent per second, while power constraints limit the number of bits per symbol that the modulation scheme can reliably encode. Increasing the number of bits per symbol requires higher transmission power to maintain reliable decoding at the receiver in the presence of noise. Consequently, quantization encodes real-valued information into sequence of bits aligned with the modulation scheme and constrained by the channel capacity.

Communicating quantized values in multi-agent systems has attracted significant interest due to its essential role in overcoming practical constraints like bandwidth and power limitations \cite{liu2024distributed,lin2024event,yan2025layered,doostmohammadian2025log,feng2025quantized}. However, in the context of distributed average consensus, quantized communication can negatively impact both the accuracy and overall performance of the system. Specifically, it affects the system behavior as follows. First, saturation occurs when the signal exceeds the quantization range, leading to large errors that can destabilize the system, if the consensus algorithm was originally designed for a non-quantized system. Second, as the system state gets closer to its target value, performance degradation arises because higher precision is required. Finally, each time an agent quantizes a value before sending it to its neighbors, some information is lost due to rounding and discretization. Over time, these small errors can accumulate, leading to biased consensus, \ie agents converge to a value that is far from the true average. These errors are even amplified when agents interact over a directed network, where the information flow is not balanced, since the errors may propagate asymmetrically within the network. Since this work focuses on average consensus in directed graphs, the discussion of related work is limited to literature specifically addressing possibly unbalanced directed graphs.

To minimize quantization errors, agents can use a dynamic quantization strategy, such as zooming in / zooming out (ZIZO) \cite{Brockett2000TAC}, where the quantization step size changes based on the range of values being exchanged. However, since the links are not bidirectional, it is difficult to synchronize the nodes to simultaneously update the size of the quantization step. In the same spirit, the authors in \cite{frasca2009average}, proposed a distributed algorithm in which the exact average is computed asymptotically. Their method works for balanced directed graphs, where each node maintains real-valued states but transmits quantized values, adapting the quantization step as needed to refine the estimates.
The majority of algorithms designed for quantized average consensus for unbalanced directed graphs achieve convergence in a probabilistic manner (see, \eg~\cite{2022:Rikos_Hadj_Johan} and references therein), meaning that nodes reach consensus on the quantized average with probability one or through other probabilistic guarantees. Nevertheless, either residual quantization errors persist indefinitely or the average consensus is approximate, the convergence is slow and depends on the network size. With only a few exceptions~\cite{Rikos2021TAC,Zambieri2009ECC,lee2020finite,song2022compressed,rikos2024finite}, the development of deterministic distributed strategies for achieving quantized average consensus in unbalanced directed graphs remains largely unexplored. Baldan and Zambieri~\cite{Zambieri2009ECC} proposed distributed algorithms that achieve average consensus with quantized transmissions and the algorithm's parameters can be chosen in a distributed fashion without knowing the number of agents composing the network. Since they only adopt the ZIZO strategy without any adaptation on the symmetry point $\sigma$ of the quantizer, there is a minimum number of bits required for the algorithm to converge which depends on the parameter choice. Rikos \emph{et al.}~\cite{rikos2024finite} proposed a distributed algorithm that is able to calculate the exact average of the initially quantized values (in the form of a quantized fraction) in a deterministic fashion after a finite number of iterations. However, the average consensus is approximate (due to the initial quantization), the number of bits for each message is not necessarily bounded by a certain value, and the convergence is slow and depends on the network size.


In this paper, we address the challenges of achieving exact average consensus over unbalanced directed graphs under limited communication bandwidth. Motivated by the limitations of existing probabilistic and approximate methods, we propose a deterministic distributed algorithm with dynamic quantization framing (\ie zooming and midpoint shifting) and surplus-consensus dynamics. Specifically, we:
\begin{list4}
\item Develop a distributed average consensus algorithm with dynamic quantization, allowing agents to reach the exact average with a fixed and finite number of bits (Section~\ref{sec:algo}). The algorithm adopts a surplus-consensus structure suited for unbalanced digraphs and avoids the division operations used in ratio-consensus methods \cite{SYS-016}, improving robustness to quantization errors.
\item Design a dynamic quantization framing mechanism, including zooming and midpoint shifting, directly into the consensus updates. This ensures that the consensus error asymptotically vanishes while maintaining deterministic guarantees under a fixed and finite number of communication bits.
\item Introduce a distributed finite-time termination mechanism that enables all agents to synchronously detect when their states are within an $\epsilon$-neighborhood of the average consensus value, so that they terminate their iterations in a coordinated and deterministic manner.
\end{list4}
These contributions are supported by rigorous theoretical analysis (Section~\ref{sec:algo}) and validated through numerical experiments (Section~\ref{sec:simulations}), emphasizing the practicality of our approach for resource-constrained directed networks.

\section{Preliminaries}\label{sec:background}

\subsection{Mathematical Notation}
The sets of real, integer, and natural numbers are denoted as $\mathbb{R}$, $\mathbb{Z}$, and $\mathbb{N}$, respectively. The set of nonnegative real (integer) numbers is denoted as $\mathbb{R}_{\geq 0}$ ($\mathbb{Z}_{\geq0}$). The nonnegative orthant of the $n$-dimensional real space $\mathbb{R}^n$ is denoted as $\mathbb{R}_{\geq 0}^{n}$. 
Matrices are denoted by capital letters, and vectors by small letters. The transpose of a matrix $A$ and a vector $x$ are denoted as $A^\top$, $x^\top$, respectively. The all-ones and all-zeros vectors are denoted by $\mathbf{1}$ and $\mathbf{0}$, respectively, with their dimensions being inferred from the context. The floor function of a real number $x$ is denoted as $\lfloor x \rfloor = \max \{ b \in \mathbb{Z} \mid b \le x \}$, and is defined as the greatest integer less than or equal to $x$. 

\subsection{Network Model}
Consider a group of $n>1$ agents communicating over a time-invariant directed network. The interconnection topology of the communication network is modeled by a digraph $\set{G}=(\set{V}, \set{E})$. Each agent $v_j$ is included in the set of digraph nodes $\set{V}=\{v_1, \cdots, v_n\}$, whose cardinality is $n=\left|\set{V}\right|$. The interactions between agents are included in the set of digraph edges $\set{E} \subseteq \set{V} \times \set{V}$. The total number of edges in the network is denoted by $m=\left|\set{E}\right|$. A directed edge $\varepsilon_{ji} \triangleq (v_j, v_i) \in \set{E}$ indicates that node $v_j$ receives information from node $v_i$. The nodes that transmit information to node $v_j$ directly are called in-neighbors of node $v_j$, and belong to the set $\inneighbor{j}=\{v_i \in \set{V} \mid \varepsilon_{ji} \in \set{E}\}$. The number of nodes in the in-neighborhood set is called in-degree and is denoted by $\indegree{j} = \left|\inneighbor{j}\right|$. The nodes that receive information from node $v_j$ directly are called out-neighbors of node $v_j$, and belong to the set $\outneighbor{j}=\{v_l \in \set{V} \mid \varepsilon_{lj} \in \set{E}\}$. The number of nodes in the out-neighborhood set is called out-degree and is denoted by $\outdegree{j}= \left|\outneighbor{j}\right|$. Each node $v_j \in \set{V}$ has immediate access to its own local state, and thus we assume that the corresponding self-loop is available $\varepsilon_{jj} \in \set{E}$, although it is not included in the nodes' out-neighborhood and in-neighborhood. 
A directed path from $v_i$ to $v_l$ with a length of $t$ exists if a sequence of nodes $i \equiv l_0,l_1, \dots, l_t \equiv l$ can be found, satisfying $(l_{\tau+1},l_{\tau}) \in \mathcal{E}$ for $ \tau = 0, 1, \dots , t-1$. In $\mathcal{G}$ a node $v_i$ is reachable from a node $v_j$ if there exists a path from $v_j$ to $v_i$ which respects the direction of the edges. The digraph $\mathcal{G}$ is said to be strongly connected if every node is reachable from every other node.

\subsection{Distributed Average Consensus Problem}

Let us assume that at each time instant $k\in\mathbb{Z}_{\geq0}$ each node $v_j \in \mathcal{V}$ maintains a scalar\footnote{The results can be trivially extended to the case where agents maintain vectorial states. In this work we consider scalar states for simplicity of exposition.} state $x_{j,k} \in \mathbb{R}$. In the distributed average consensus problem, the goal of the agents is to reach consensus to a value equal to the average of their initial states, which is defined as
\begin{align}\label{eq:ac_problem}
    x_{\text{ave}} := \dfrac{1}{n} \sum_{j=1}^{n} x_{j,0}.
\end{align}
Due to the absence of global knowledge at each agent, agents are required to execute an iterative distributed algorithm to eventually converge to the average consensus value, by means of local communication and computation. In particular, agents update their local states using information received from their immediate neighboring agents, through communication channels. 

In practice, however, the exchange of information is often restricted to be unidirectional (instead of bidirectional) as a consequence of diverse transmission power, interference levels, and communication ranges. Thus, although an agent $v_j$ may receive information from agent $v_i$, that does not necessarily imply that $v_j$ can also send information back to agent $v_i$. Consequently, the network topology is best modeled by a
directed graph (digraph).

\subsection{Average Consensus in Digraphs using Surplus Consensus}

A linear algorithm for achieving average consensus over directed and strongly connected networks was proposed in~\cite{cai2012average}. Each agent $v_j \in \set{V}$ maintains a local state $x_{j,k} \in \mathbb{R}$ and a surplus variable $s_{j,k} \in \mathbb{R}$, initialized as $x_{j,0} \in \mathbb{R}$, $s_{j,0} = 0$. At each iteration $k$, agent $v_j$ sends $x_{j,k}$ and $c_{lj}s_{j,k}$ to its out-neighbors $v_l \in \outneighbor{j}$, where the weights $c_{lj}$ are defined as:
\begin{align}\label{eq:c-weights}
    c_{lj}=\begin{cases}
    \frac{1}{1+\outdegree{j}}, & \text{if } v_l \in \outneighbor{j} \lor l = j,\\
    0, & \text{otherwise}.
    \end{cases}
\end{align}
Similarly, agent $v_j$ receives $x_{i,k}$ and $c_{ji}s_{i,k}$ from its in-neighbors $v_i \in \inneighbor{j}$, and updates its variables as:
\begin{subequations}\label{eq:ppac}
\begin{align}
x_{j,k+1} &= r_{jj} x_{j,k} + \gamma s_{j,k} + \sum_{v_i \in \inneighbor{j}} r_{ji} x_{i,k}, \\
s_{j,k+1} &= c_{jj} s_{j,k} + x_{j,k} - x_{j,k+1} + \sum_{v_i \in \inneighbor{j}} c_{ji} s_{i,k},
\end{align}
\end{subequations}
where $\gamma > 0$ is the \emph{surplus gain}\footnote{The choice of $\gamma$ requires global knowledge of the network size.}, used to tune the influence of the surplus variable $s$ on the primal state variable $x$. The weights $r_{ji}$, used to combine received $x_{i,k} $, are given by:
\begin{align}\label{eq:r-weights}
    r_{ji} = \begin{cases}
    \frac{1}{1+\indegree{j}}, & \text{if } v_i \in \inneighbor{j} \lor j = i,\\
    0, & \text{otherwise}.
    \end{cases}
\end{align}

\begin{remark}
\emph{Push weights} $c_{lj} \geq 0$ are assigned by the transmitting agents based on their out-degree, which is either estimated or computed (see~\cite{hadjicostis2015robust,charalambous2015distributed1,makridis2023utilizing}). These weights form a column-stochastic matrix $C \in \mathbb{R}_{\geq0}^{n \times n}$ with $\mathbf{1}^\top C = \mathbf{1}^\top$.
\end{remark}

\begin{remark}
\emph{Pull weights} $r_{ji} \geq 0$ are assigned by the receiving agents using their in-degree, which can be directly counted. These weights form a row-stochastic matrix $R \in \mathbb{R}_{\geq0}^{n \times n}$ with $R\mathbf{1} = \mathbf{1}$.
\end{remark}

The main idea behind \cite{cai2012average} is to keep the sum of agents' variables (state and surplus) time-invariant, and equal to the initial sum of agents' variable, \ie $\mathbf{1}^{\top}({\bm x}_k + {\bm s}_k) = \mathbf{1}^{\top} {\bm x}_0$ for all time $k\in\mathbb{Z}_{\geq0}$, where ${\bm x}_k$ and ${\bm s}_k$ are stacks of the state and surplus variables of all agents into column vectors.

\section{Problem Formulation}\label{sec: problem}

Consider a network of $n$ agents indexed by $\mathcal{V}=\{1,\dots,n\}$,
interconnected over a fixed directed graph (digraph) $\mathcal{G}=(\mathcal{V},\mathcal{E})$. Each agent $v_{j}$ holds an initial scalar value $x_{j,0}\in\mathbb{R}$, and the collective objective of all agents is to compute the average consensus value defined as $x_{\text{ave}}$ in \eqref{eq:ac_problem}, in a distributed manner, and using only local communication and processing. The communication among agents is subject to quantization with a finite and limited number of bits, hence, agents are only allowed to communicate quantized messages. In this paper, we aim to design a distributed coordination algorithm that drives agents' states converge to the exact (asymptotic) average consensus value $x_{\text{ave}}$, despite quantization and information flow imbalance due to the directionality of the links. In addition to exact average consensus, we are also interested in characterizing \emph{$\epsilon$-convergence}, whereby each agent is guaranteed to reach an $\epsilon$-accurate estimate of the global average consensus value in finite time, with a fully distributed and synchronized termination mechanism. 

\begin{remark}
With few quantization bits, some values may become unrepresentable, causing saturation at certain agents. While most existing methods assume enough bits to avoid this issue, the proposed \ouralgorithm{} algorithm can still operate under tight bit-budget. In fact, \ouralgorithm{} tolerates temporary saturation in the exchange of additional communication rounds to gradually refine the quantized information.
\end{remark}

\section{Quantized Average Consensus with Dynamic Framing}
\label{sec:algo}

Assume that each agent $v_j\in\set{V}$ is given an initial quantization step size $\Delta_0$, midpoint $\sigma_0$, and $b$ number of bits for encoding its local state and surplus variables $\check{x}_{j,k}$ and $\check{s}_{j,k}$, so that they are sent to its out-neighbors $v_l\in\outneighbor{j}$, in a quantized form, \ie
\begin{subequations}\label{eq:quantized_values}
    \begin{align}
        \check{x}_{j,k} &= Q\big(x_{j,k},b,\Delta_k,\sigma_k\big),\\
        \check{s}_{j,k} &= Q \big(s_{j,k},b,\Delta_k,0\big),
    \end{align}
\end{subequations}
where $Q(\cdot)$ is enforced to be identical for each agent, by the construction of the algorithm. Here it is important to note that, the quantizer that gives the quantized value $\check{s}_{j,k}$ for the surplus variable $s_{j,k}$, takes the same number of bits and quantization step size as the quantizer for $x_{j,k}$ with the only difference being the midpoint $\sigma$ which is always at $0$. A more detailed discussion regarding the midpoint $\sigma$ is provided in subsequent sections. 

Before introducing the coordinated dynamic framing mechanism for zooming the quantization levels and shifting the quantizers' midpoint, we state the following assumptions that are essential for the description of the proposed algorithm.

\begin{assumption}\label{assum:1}
Let $D$ denote the directed diameter of the network, defined as $D = \max_{v_{i},v_{j} \in\mathcal{V}} \delta(i,j)$, where $\delta(i,j)$ is the length of the shortest directed path from node $v_{i}$ to node $v_{j}$. Each agent $v_j \in \mathcal{V}$ is assumed to know an upper bound $\bar D$ on the diameter, satisfying $\bar D \ge D$.
\end{assumption}

\begin{assumption}\label{assum:2}
    All the agents $v_j\in\set{V}$ set their initial quantization step size to $\Delta_0$ and quantization midpoint to $\sigma_0$.
\end{assumption}

\begin{assumption}\label{assum:3}
    All the agents $v_j\in\set{V}$ are given the same surplus gain $\gamma>0$ and quantization zooming factor $\alpha>0$.
\end{assumption}

Assumption~\ref{assum:1} enables us to design max- and min-consensus protocols that are performed by the agents every $\bar{D}$ time steps. 
These protocols enable agents to take coordinated decisions for zooming out or zooming in the quantization level $\Delta_k$ of their quantizer, every $\bar{D}$ steps, with the aim of achieving faster and more accurate quantized average consensus, compared to fixed quantization strategies. 
Assumption~\ref{assum:1} is standard in the literature (see for example \cite{cady2015finite,charalambous2015distributed2}). 
However, one can employ distributed mechanisms to obtain the diameter of the network either via exact computation \cite{oliva2016distributed} or estimation \cite{garin2012distributed}. 
Assumption~\ref{assum:2} enables nodes to maintain efficient communication during the operation of our algorithm over the bandwidth-limited communication links. Assumption~\ref{assum:3} is necessary for the operation of our algorithm enabling consistent coordination with quantized message exchange among nodes.

\subsection{Coordinated Quantizers}\label{sec:coordinated_quantizers}
In what follows we will introduce (a) a coordinated zooming mechanism for adjusting the quantization step size $\Delta_{k}$ that define the quantization levels to encode the state $x_{j,k}$ and surplus variable $s_{j,k}$; and (b) a coordinated midpoint shifting mechanism for adjusting the quantizers' symmetry point $\sigma_{k}$ so that it tracks the average consensus value with the aim of reaching exact (asymptotic) convergence to $x_{\text{ave}}$ in \eqref{eq:ac_problem}. Recall that all agents have access to the same initial quantization step size $\Delta_0$, midpoint $\sigma_0$, and number of bits for quantization $b$. Executing max/min-consensus procedures which converge to the maximum/minimum value exchanged in the network after at most $\bar{D}$ steps, agents are able to coordinate their zooming and shifting mechanisms for adjusting the quantization step size $\Delta_k$ and midpoint $\sigma_{k}$, respectively. 

Before we formally define the coordination mechanism for quantization zooming, we let $\set{K}_{\bar{D}}$ denote the set of times when agents may adjust their quantization step size, \ie
\begin{align}
    \set{K}_{\bar{D}} := \{ k \in \mathbb{Z}_{\geq 0} \mid k = \bar{D}h, \, h = 1,2,3,\ldots \}.    
\end{align}
Here, it is important to emphasize that the broadcast of the quantized values $\check{x}_{j,k}$ and $\check{s}_{j,k}$ from each node $v_{j}\in\set{V}$ (towards converging to the average consensus value) occurs at each time step $k\in\mathbb{Z}_{\geq0}$. In a similar manner, the exchange of coordination values for the max/min-consensus protocols that will be described in the subsequent section, \ie $w_{j,k}$, $m_{j,k}$, and $M_{j,k}$, are broadcast at each time step $k\in\mathbb{Z}_{\geq0}$, although at each time $k\in \set{K}_{\bar{D}}$ the coordination variables are reset. Fig.~\ref{fig:timeline} depicts an example of the time instances when the quantization step size $\Delta_{k}$ (and similarly the midpoint $\sigma_{k}$) is adjusted.
\begin{figure}[h]
    \centering
    \includegraphics[width=0.5\linewidth]{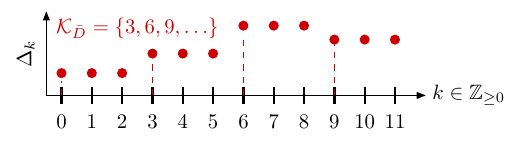}
    \caption{Example of time instances for zooming (resp. midpoint shifting) updates via adjustments of the step size $\Delta_{k}$ (resp. $\sigma_{k}$), for $\bar{D}=3$.}
    \label{fig:timeline}
\end{figure}

\subsection{Coordinated Quantization Zooming}\label{sec:zooming}
Adjusting the quantization step size requires agents to coordinate in order to agree on their quantizers' operation (\eg zoom-in, hold, or zoom-out), and its adjustment on the step-size which determines the quantization precision. All the agents synchronously update the quantization step-size at each $k\in\mathcal{K}_{\bar{D}}$, following a distributed coordination protocol (which we describe later on) that allow agents to agree on the same variable $\zeta_k\in\{-1,0,1\}$. Then, at time $k\in\mathcal{K}_{\bar{D}}$ all agents update their quantization step-size which is identical for all agents, as follows:
\begin{align}\label{eq:delta_update}
    \Delta_{k} =
    \begin{cases}
      (1+\alpha)\,\Delta_{k-1}, & \text{if } \zeta_{k} = 1,\\
      \Delta_{k-1}/(1+\alpha), & \text{if } \zeta_{k} = -1,\\
      \Delta_{k-1}, & \text{otherwise.} 
    \end{cases}
\end{align}
where $\alpha>0$ is the predefined zooming factor. Here, $\zeta_k\in\{-1,0,1\}$ captures the agreed adjustment (zoom-in, hold, zoom-out, respectively), and results from a distributed max-consensus protocol executed during the previous synchronization interval (during the last $\bar{D}$ iterations). To this end, each agent $v_j\in\mathcal V$ maintains a local coordination variable $w_{j,k}\in\{-1,0,1\}$, initialized at $w_{j,0}=0$.
The local variable $w_{j,k}$ classifies the alignment of the value $x_{j,k}$ (which is to be quantized) into three regions, according to the quantization range $q_k:=(2^{b-1}-0.5)\Delta_k$ and the midpoint $\sigma_k$ as shown in Fig.~\ref{fig:regions}. Note that, the midpoint $\sigma_k$ is available by the time that $w_{j,k}$ is determined. More details regarding the update of the midpoint $\sigma_k$ are discussed in Section~\!\ref{sec:midpoint_shifting}.
\begin{figure}[h]
    \centering
    \includegraphics[scale=0.9]{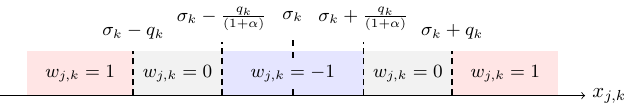}    \caption{Zooming regions. Blue region: central--well within range; Gray: peripheral--near the range boundaries; Red: out-of-range-- exceeding the range.}
    \label{fig:regions}
\end{figure}

During each time step $k\in\mathbb{Z}_{\geq0}\setminus\mathcal{K}_{\bar{D}}$, each agent $v_j\in\set{V}$ broadcasts its local $w$-value to its out-neighbors $v_\ell\in\outneighbor{j}$ (using two bits) and receives the $w$-values from its in-neighbors $v_i\in\inneighbor{j}$ to update its $w_{j,k}$ using a max-consensus protocol, \ie
\begin{align}\label{eq:max_consensus_repeat}
    w_{j,k+1} = \max_{v_i\in\inneighbor{j}\cup\{v_{j}\}}\{\,w_{i,k}\,\},
\end{align}
where each agent also includes its own value to take the max value. As this update is iterated over time, after $\bar{D}$ steps the information has propagated throughout the network and all agents reach agreement on $\max_{v_j\in\set{V}} \{\, w_{j,k} \,\},$
for all $k\in\mathcal{K}_{\bar{D}}=\{\bar{D},2\bar{D},3\bar{D},\ldots\}$. At these synchronization instants $k\in\mathcal{K}_{\bar{D}}$, all agents $v_j\in\set{V}$ set $\zeta_{k}=w_{j,k}$, and subsequently reset their $w$-variables for the next quantizers' coordination interval as
\begin{align}\label{eq:w_reset}
    w_{j,k} &= \begin{cases}
        \hfill 1, & \text{if}~\lvert x_{j,k} - \sigma_{k} \rvert > q_{k},\\
        \hfill -1, & \text{if}~\lvert x_{j,k} - \sigma_{k} \rvert < q_{k}/(1+\alpha),\\        
        \hfill 0, & \text{otherwise}.
    \end{cases}
\end{align}
Here, it is worth mentioning that the reset of $w$-variable is based on local checks that classify $w_{j,k}$ into the three regions according to the state variable $x_{j,k}$ and the quantization range $q_{k}$. This reset initializes a new round of max-consensus process which then runs for the next $\bar{D}$ iterations with a fixed quantizer with step size $\Delta_k$ and midpoint $\sigma_{k}$. Intuitively, $w_{j,k}$ evolves by max-consensus steps between quantization adjustment instants, and is reinitialized at every multiple of $\bar{D}$ based on the new quantizer configuration.

\subsection{Coordinated Quantization Midpoint Shifting}\label{sec:midpoint_shifting}
To achieve arbitrarily small error on the average consensus value, the symmetry midpoint of the uniform quantizers should be shifted as well at every zooming instance. In essence, we would like to design an update rule for the quantizers' midpoint value $\sigma_k$ that asymptotically tracks the average consensus value in \eqref{eq:ac_problem}, in a distributed fashion. This allows agents to further zoom-in their quantizers such that they reach average consensus with arbitrary high precision, without reaching a plateau.

To shift the quantizers' midpoint $\sigma_k$ in a distributed and coordinated manner, each agent $v_j \in \set{V}$ maintains two auxiliary variables $M_{j,k}$ and $\mu_{j,k}$ which are updated using max- and min-consensus, respectively. The max- and min-consensus mechanism is similar to the one introduced in Section~\!\ref{sec:zooming} for quantization zooming. In particular, at each time $k \in \mathbb{Z}_{\ge 0}$, agent $v_j$ broadcasts $M_{j,k}$ and $m_{j,k}$ (with $b$ bits each) to all its out-neighbors 
$v_\ell \in \outneighbor{j}$ and simultaneously receives $M_{i,k}$ and $m_{i,k}$ from all its in-neighbors 
$v_i \in \inneighbor{j}$ to compute the local maximum according to
\begin{subequations}\label{eq:max-min-consensus}
\begin{align}
    M_{j,k+1} &= \max_{v_i \in \inneighbor{j}\cup \{v_j\}}\big\{M_{i,k}\big\},\\
    \mu_{j,k+1} &= \min_{v_i \in \inneighbor{j}\cup \{v_j\}}\big\{\mu_{i,k}\big\},
\end{align}
\end{subequations}
with initial conditions $M_{j,0}=0$ and $m_{j,0}=0$ for all $v_{j}\in\set{V}$. Note that, these variables are reset at each time $k \in \set{K}_{\bar{D}}$, as $M_{j,k}=\check{x}_{j,k}$ and $\mu_{j,k}=\check{x}_{j,k}$. By the end of ${\bar{D}}$ iterations of the aforementioned max- and min-consensus procedures, agents compute the maximum and minimum quantized value in the network, \ie $M_{k}=\max_{v_j \in \set{V}}\{M_{j,k}\}$ and $\mu_{k}=\min_{v_j \in \set{V}}\{\mu_{j,k}\}$, respectively. Based on these values, they shift their quantizers' midpoint according to:
\begin{align}\label{eq:shifting}
    \sigma_k = \frac{1}{2}(M_k + \mu_k).
\end{align}

\begin{remark}
It is worth noting that the quantizer parameters, namely the step size $\Delta_k$ and midpoint $\sigma_k$, 
are updated only at synchronization instants $k \in \mathcal{K}_{\bar{D}}$. Hence, during the intermediate $\bar{D}$ iterations both parameters remain fixed.
\end{remark}

\subsection{Distributed Averaging with Coordinated Quantization}

In what follows, we describe the local computations executed by each agent $v_j \in \mathcal{V}$ at each time $k \in \mathbb{Z}_{\ge 0}$ using \textbf{Algorithm\!~\ref{alg:alg1}}. Recall that, each agent maintains the local variables 
$x_{j,k}$ and $s_{j,k}$ representing, respectively, the state variable and the auxiliary variable, and 
communicates only quantized versions of these quantities to its out-neighbors.

\begin{center}
\begin{minipage}{.5\linewidth}
\begin{algorithm}[H]
\caption{\ouralgorithm{} from agent's $v_j$ perspective.}
\label{alg:alg1}
\small
\KwIn{$x_{j,0} \in \mathbb{R}$, $s_{j,0} = 0$, $\sigma_0=0$, $\Delta_0=1$, $\zeta_0=0$, $w_{j,0}=0$, $M_{j,0}=0$, $m_{j,0}=0$, $\gamma>0$}
\For{ $k = 0,1,2,\ldots$}{
    \textbf{Update:} {$(\Delta_{k},\sigma_{k})$ via Algorithm\!~\ref{alg:alg2}} \textbf{if} $k>0$
    
    \textbf{Compute:} 
    \begin{flalign*}
        &\check{x}_{j,k} = Q(x_{j,k},b,\Delta_k,\sigma_k)&\\
        &\check{s}_{j,k} = Q(s_{j,k},b,\Delta_k,0)&\\
        &e^x_{j,k} = x_{j,k} - \check{x}_{j,k}&\\
        &e^s_{j,k} = s_{j,k} - \check{s}_{j,k}&
    \end{flalign*}
    
    \textbf{Broadcast:}
    $\check{x}_{j,k}$ and $\check{s}_{j,k}$ to all $\ell \in \outneighbor{j}$
    
    \textbf{Receive:}
    $\check{x}_{i,k}$ and $\check{s}_{i,k}$ from all $i \in \inneighbor{j}$

    \textbf{Update:}   
    \begin{flalign*}
    & x_{j,k+1} = \!\sum_{v_i \in \inneighbor{j}} \! r_{ji} \check{x}_{i,k} + \gamma\, s_{j,k} + e^x_{j,k}&\\
    & s_{j,k+1} = \!\sum_{v_i \in \inneighbor{j}} \! c_{ji} \check{y}_{i,k} - x_{j,k+1} + x_{j,k} + e^{s}_{j,k}&
    \end{flalign*}   
}
\KwOut{$x_{j,k+1}$}
\end{algorithm}
\end{minipage}
\end{center}

Each agent initializes its local states as $x_{j,0} \in \mathbb{R}$ and $s_{j,0}=0$, 
and sets the quantizer parameters $\sigma_0=0$, $\Delta_0=1$, $\alpha>0$, and $\zeta_0=0$. 
The initial quantizer $Q(\cdot)$ is thus centered at $\sigma_0$ with step size $\Delta_0$. At each iteration $k\in\mathbb{Z}_{\geq0}$ each agent encodes its current values using the local coordinated quantizer $Q(\cdot)$, to obtain $\check{x}_{j,k}$ and $\check{s}_{j,k}$ as in \eqref{eq:quantized_values}, and computes the corresponding quantization errors
\begin{subequations}\label{eq:quantization_errors}
\begin{align}
e^x_{j,k} &= x_{j,k} - \check{x}_{j,k},\\
e^s_{j,k} &= s_{j,k} - \check{s}_{j,k}.    
\end{align}
\end{subequations}
Here, $Q(\cdot)$ corresponds to the last updated configuration of the coordinated quantizer with $b$ number of bits, step size $\Delta_k$, and midpoint $\sigma_k$ for $x$-variable (and $0$ for $s$-variable). Hence, each agent $v_j\in\set{V}$ encodes the values that are to be broadcast, based on the last updated quantizer with parameters $(\Delta_{k},\sigma_{k})$, \ie 
\begin{align}
    Q(u_k,b,\Delta_k,\sigma_k) &= \begin{cases}
        \sigma_{k} + \Delta_{k}(2^{b-1} -1), & \text{if}~u_k > \overline{u}_{k},\\
        \sigma_{k} - \Delta_{k}(2^{b-1} -1), & \text{if}~u_k\leq\underline{u}_{k},\\
        \sigma_{k} + \Delta_{k} \bigl \lfloor \frac{u_{k}-\sigma_{k}}{\Delta_{k}} +\frac{1}{2}\bigr \rfloor, & \text{otherwise},
    \end{cases}
\end{align} 
where $u_k$ is the input value at time $k$ to be encoded (\ie $x_{j,k}$ and $s_{j,k}$), and $\overline{u}_k$ and $\underline{u}_k$ are the upper and lower quantization limits, respectively, which are given by
\begin{subequations}
\begin{align}
    \overline{u}_{k} = \sigma_{k} + \Delta_{k}(2^{b-1}-0.5),\\
    \underline{u}_{k} = \sigma_{k} - \Delta_{k}(2^{b-1}-0.5).  
\end{align}
\end{subequations}
Subsequently, each agent $v_j\in\set{V}$ broadcasts $\check{x}_{j,k}$ and $\check{s}_{j,k}$ to all its out-neighbors 
$v_{\ell} \in \outneighbor{j}$, and receives the corresponding values 
$\check{x}_{i,k}$, $\check{s}_{i,k}$ from all its in-neighbors 
$v_i \in \inneighbor{j}$. Upon the reception and decoding of this information, each agent $v_j\in\set{V}$ updates its states as follows:
\begin{subequations}
\begin{align}
x_{j,k+1} &=  \sum_{v_i \in \inneighbor{j}} r_{ji}\check{x}_{i,k} + x_{j,k} + \gamma s_{j,k},\\
s_{j,k+1} &= \sum_{v_i \in \inneighbor{j}} c_{ji}\check{s}_{i,k} + s_{j,k} - x_{j,k+1} + x_{j,k},
\end{align}
\end{subequations}
where $c_{ji}$ and $r_{ji}$ are given in \eqref{eq:c-weights} and \eqref{eq:r-weights}, respectively. Note that, the quantized value $\check{s}_{i,k}$ is sent over the communication channel at each iteration $k\in\mathbb{Z}_{\geq0}$, while its corresponding weight $c_{ji}$ can be reconstructed exactly by agent $v_j$, by simply setting $c_{ji}=1/(d_i^{\texttt{out}}+1)$, once the out-degree of agent $v_i$, $d_i^{\texttt{out}}\in\mathbb{N}$, is available at node $v_{j}$ at the beginning of the consensus iterations.

The quantizer parameters $(\Delta_k,\sigma_k)$ are updated by \textbf{Algorithm\!~\ref{alg:alg2}}, which performs the distributed coordination and adjustment of the quantization step size based on the consensus variable $\zeta_k$. 
This ensures that all agents synchronously zoom in, zoom out, or hold the quantization range according to network-wide conditions.

\begin{center}
\begin{minipage}{.5\linewidth}
\begin{algorithm}[H]
\caption{Distributed quantizer with dynamic framing at time $k$ from agent's $v_j$ perspective.}
\label{alg:alg2}
\small
\KwRequire{Zooming factor $\alpha>0$}
\KwIn{$x_{j,k}$, $s_{j,k}$, $w_{j,k}$, $M_{j,k}$, $m_{j,k}$, $\sigma_k$, $\Delta_k$}

\textbf{Broadcast:} $w_{j,k}$, $M_{j,k}$, and $m_{j,k}$ to all $v_{\ell}\!\in\!\outneighbor{j}$

\textbf{Receive:} $w_{i,k}$, $M_{i,k}$, and $m_{i,k}$ from all $v_i\!\in\!\inneighbor{j}$

\If{$k \in \mathcal{K}_{\bar{D}}=\{\bar{D},2\bar{D},3\bar{D},\ldots\}$}{

    \textbf{Stop iterations if:} $M_{j,k}-m_{j,k} \leq \epsilon - \Delta_{k}$
    
    \textbf{Set adjustment decision:} $\zeta_{k} = w_{j,k}$

    \textbf{Update step-size:}
    \begin{flalign*}
        &\Delta_{k} =
        \begin{cases}
            (1+\alpha)\Delta_{k-1}, & \text{if } \zeta_{k} = 1,\\
            \Delta_{k-1}/(1+\alpha), & \text{if } \zeta_{k} = -1,\\
            \Delta_{k-1}, & \text{otherwise.}
        \end{cases}&
    \end{flalign*}

    \textbf{Update agreed midpoint:} $\sigma_{k} = m_{j,k}$

    \textbf{Compute dynamic range:} $q_{k} = \big(2^{b-1} - \tfrac{1}{2}\big)\Delta_{k}$

    \textbf{Reset coordination variables:}
    \begin{flalign*}
        &w_{j,k} =
        \begin{cases}
            1, & \text{if } |x_{j,k} - \sigma_{k}| > q_{k},\\
            -1, & \text{if } |x_{j,k} - \sigma_{k}| < q_{k}/(1+\alpha),\\
            0, & \text{otherwise,}
        \end{cases}&\\
        &M_{j,k} = \check{x}_{j,k},&\\
        &m_{j,k} = \check{x}_{j,k}.&
    \end{flalign*}}
\Else{
\textbf{Hold: $(\Delta_{k}\leftarrow\Delta_{k-1},\sigma_{k}\leftarrow\sigma_{k-1})$
}}

\textbf{Update:}
\begin{flalign*}
    &w_{j,k+1} = \max_{v_i\in\inneighbor{j}\cup\{v_{j}\}} \{ w_{i,k} \},&\\
    &M_{j,k+1} = \max_{v_i\in\inneighbor{j}\cup\{v_{j}\}} \{ M_{i,k} \},&\\
    &m_{j,k+1} = \max_{v_i\in\inneighbor{j}\cup\{v_{j}\}} \{ m_{i,k} \}.&
\end{flalign*}

\KwOut{$(\Delta_{k}, \sigma_{k})$}
\end{algorithm}
\end{minipage}
\end{center}

\begin{remark}
    Our proposed algorithm (\textbf{Algorithm\!~\ref{alg:alg1}}) adopts the surplus structure of \cite{cai2012average}, and is a distributed average consensus variant of the Push-Pull Gradient Tracking method introduced by Song \emph{et al.}~\cite{song2022compressed}, which addresses distributed optimization with compression over general directed networks. Although the algorithm in \cite{song2022compressed} can solve the average consensus problem as a special case, it is specifically designed for optimizing global objective functions with local information at each node. As a result, it introduces unnecessary computational and communication overhead for consensus tasks, such as maintaining additional variables, performing gradient evaluations, and communicating them over the network. In contrast, our algorithm directly targets distributed average consensus using a dynamic quantization scheme that achieves arbitrarily small error with a fixed number of bits, and for any initial $x_{j,0}\in\mathbb{R}$ for each $v_{j}\in\set{V}$.
\end{remark}

\subsection{Distributed Coordinated Termination}
Now, let us introduce a distributed scheme to enable agents in a network to simultaneously stop transmitting and updating their values as soon as they have reached a predefined averaging accuracy. Finite-time convergence to an $\epsilon$-neighborhood of the average consensus value is essential as practical multi-agent systems should know when to stop exchanging messages to save bandwidth, energy, and computation. Moreover, it provides a guaranteed accuracy needed by subsequent tasks that rely on the consensus value \cite{makridis2024fully,fu2025cutting}. A distributed termination mechanism provides a principled way for all agents to trigger, using only local checks, that the network has reached the desired $\epsilon$-consensus around the global average, and to terminate communication synchronously. Such termination mechanisms are essential in quantized average consensus as (i) exact convergence is unattainable, (ii) asymptotic convergence requires unlimited communication and energy, and (iii) no central coordinator exists to declare termination.

In this setting, we require all agents $v_j\in\mathcal{V}$ to stop transmitting and updating their values as soon as
\begin{align}
	|x_{j,k} - x_{\text{ave}}| \leq \epsilon,
\end{align}
where $\epsilon$ is a known error bound specified \emph{a~priori} by all agents  $v_j\in\mathcal{V}$. A distributed mechanism that achieves stopping in directed graphs without a central coordinator can be found in \cite{cady2015finite}. However, such a mechanism would require agents to transmit extra real-valued information to their out-neighbors. Instead, here, we exploit the coordinated quantizer to allow agents synchronously stop their average consensus procedure. More specifically, at each time $k\in\mathcal{K}_D$ and before resetting the max- and min-consensus variables $M_{j,k}$ and $\mu_{j,k}$, each agent $v_j\in\mathcal{V}$ executes the following check 
\begin{align}\label{eq:stop_condition}
	M_{j,k} - \mu_{j,k} \leq \epsilon - \Delta_k,
\end{align}
which can be seen as a local check since $\Delta_k$ is coordinated, \ie same for all agents. This local check establishes that the agents' values (max and min values) are already close enough that, even after accounting for the current quantization error, the consensus error is already within the desired accuracy $\epsilon$. Recall that, $M_{j,k}$ and $\mu_{j,k}$ are guaranteed to be the same among the agents as a result of the coordinated quantization midpoint shifting mechanism. Thus, the condition in \eqref{eq:stop_condition} holds true synchronously for all agents, and they can stop transmitting and updating their values. In what follows, we provide a formal statement that describes the local stopping criterion.

\section{Convergence Analysis}
To analyze the convergence properties of our proposed algorithm, we first introduce its compact vector-matrix form which can be written as:
\begin{subequations}\label{eq:qac_az}
\begin{align}
{\bm x}_{k+1} &= {\bm x}_k + \gamma{\bm s}_k + (R-I) \check{\bm {x}}_k,\label{eq:qac_az_1}\\
{\bm s}_{k+1} &= {\bm x}_k - {\bm x}_{k+1} + {\bm s}_k + (C-I) \check{\bm s}_k\label{eq:qac_az_2}.
\end{align}
\end{subequations}

A necessary requirement for any average consensus algorithm is the preservation of the global state sum (often called the \emph{total mass}), ensuring that the average value is conserved over time. The following lemma shows that our quantized update dynamics maintain this property despite the presence of instantaneous quantization errors.

\begin{lemma}\label{lemma:1}
    The total mass in the network (hence the state average) is preserved using the iterations in \eqref{eq:qac_az} for all $k\in\mathbb{Z}_{\geq0}$, \ie $\mathbf{1}^{\top}\big( {\bm x}_{k} + {\bm s}_{k} \big) = \mathbf{1}^{\top} {\bm x}_{0}$. 
\end{lemma}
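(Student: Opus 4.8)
The plan is to show that the per-step increment of the total mass $\mathbf{1}^{\top}(\bm{x}_k + \bm{s}_k)$ vanishes, and then conclude by induction. The natural first step is to add the two update equations in \eqref{eq:qac_az} so that the coupling between state and surplus cancels. Substituting \eqref{eq:qac_az_2} into the sum $\bm{x}_{k+1} + \bm{s}_{k+1}$, the explicit $-\bm{x}_{k+1}$ term in the surplus update exactly annihilates the $\bm{x}_{k+1}$ appearing in the sum, leaving
\begin{align}
\bm{x}_{k+1} + \bm{s}_{k+1} = \bm{x}_k + \bm{s}_k + (C-I)\check{\bm{s}}_k.
\end{align}
This telescoping is precisely the structural feature that the surplus update is designed to produce, and it isolates the entire quantization-dependent contribution into the single term $(C-I)\check{\bm{s}}_k$. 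Notably, the state-quantization contribution $(R-I)\check{\bm{x}}_k$ from \eqref{eq:qac_az_1} disappears from the sum entirely, since it enters $\bm{x}_{k+1}$ and $\bm{s}_{k+1}$ with opposite signs.

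Next I would left-multiply by $\mathbf{1}^{\top}$ and invoke the column-stochasticity of $C$. Recall that $C$ is column-stochastic with $\mathbf{1}^{\top}C = \mathbf{1}^{\top}$ (as noted in the remark on the push weights), so that $\mathbf{1}^{\top}(C-I) = \mathbf{0}^{\top}$ and hence $\mathbf{1}^{\top}(C-I)\check{\bm{s}}_k = 0$ regardless of the value of the quantized surplus $\check{\bm{s}}_k$. Consequently, $\mathbf{1}^{\top}(\bm{x}_{k+1} + \bm{s}_{k+1}) = \mathbf{1}^{\top}(\bm{x}_k + \bm{s}_k)$ for every $k\in\mathbb{Z}_{\geq0}$, so the total mass is invariant under one step.

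Finally, a straightforward induction on $k$ propagates this invariance from the initial condition, yielding $\mathbf{1}^{\top}(\bm{x}_k + \bm{s}_k) = \mathbf{1}^{\top}(\bm{x}_0 + \bm{s}_0)$ for all $k$; using the prescribed initialization $\bm{s}_0 = \mathbf{0}$ then gives $\mathbf{1}^{\top}(\bm{x}_k + \bm{s}_k) = \mathbf{1}^{\top}\bm{x}_0$, as claimed. There is no genuine obstacle in this argument; the only point requiring care is to confirm that the quantization errors do not leak into the conserved quantity, and the proof pinpoints exactly why: the state error $(R-I)\check{\bm{x}}_k$ cancels through the surplus coupling, while the surplus error $(C-I)\check{\bm{s}}_k$ is annihilated by the left null vector $\mathbf{1}^{\top}$ of $C-I$. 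The key takeaway is that mass preservation holds for \emph{arbitrary} quantizer parameters $(\Delta_k,\sigma_k)$ and arbitrary $\check{\bm{s}}_k$, so the dynamic framing mechanism can never corrupt the global average.
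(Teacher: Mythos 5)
Your proof is correct and follows essentially the same route as the paper: sum the two updates so that the $-\bm{x}_{k+1}$ term in the surplus equation cancels $\bm{x}_{k+1}$ (and with it the $(R-I)\check{\bm{x}}_k$ and $\gamma\bm{s}_k$ contributions), then annihilate the remaining $(C-I)\check{\bm{s}}_k$ term with $\mathbf{1}^{\top}$ via column-stochasticity of $C$, and induct. Your write-up is slightly more explicit than the paper's about why the quantization errors cannot leak into the conserved quantity, but the argument is the same.
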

\begin{proof}
    Consider the summation of all agents' $x$ and $s$ values at time step $k+1$, \ie $\mathbf{1}^{\top}\big( {\bm x}_{k+1} + {\bm s}_{k+1} \big)$. Then, plugging  \eqref{eq:qac_az_1} into \eqref{eq:qac_az_2} and substituting into the evolution at time step $k+1$, we get:
    \begin{align}
        \mathbf{1}^{\top}\big( {\bm x}_{k+1} + {\bm s}_{k+1} \big) \nonumber = & \mathbf{1}^{\top} {\bm x}_{k} + \mathbf{1}^{\top} {\bm s}_{k} + \mathbf{1}^{\top} C \check{{\bm s}}_{k} - \mathbf{1}^{\top} I \check{{\bm s}}_{k} \nonumber\\ = &
        \mathbf{1}^{\top} \big( {\bm x}_{k} + {\bm s}_{k} \big),
    \end{align}
    where the last equality comes from the column-stochasticity of matrix $C$. This proves the statement of the lemma.
\end{proof}

Having established the mass-preservation property of the proposed quantized dynamics, we next present the convergence guarantees of \textbf{Algorithm\!~\ref{alg:alg1}}. Theorem~\ref{theorem:1} shows that the algorithm achieves average consensus asymptotically, as the quantization errors vanish when a sufficiently small zooming factor $\alpha>0$ is selected in the dynamic quantization framing scheme. Theorem~\ref{theorem:2} establishes $\epsilon$-convergence under the proposed finite-time termination rule, which allows all agents to synchronously detect when their states have entered an $\epsilon$-neighborhood of the true average. Before introducing the main theorems of this paper, we further define the following notation which is essential for the corresponding convergence proofs.

Substituting \eqref{eq:qac_az_1} into \eqref{eq:qac_az_2}, and augmenting the state by letting ${\bm z}_k = \begin{bmatrix}{\bm x}_k^{\top}, {\bm s}_k^{\top}\end{bmatrix}^{\top}\in\mathbb{R}^{2n}$ and $\check{\bm z}_k = \begin{bmatrix}\check{\bm x}_k^{\top}, \check{\bm s}_k^{\top}\end{bmatrix}^{\top}\in\mathbb{R}^{2n}$ the iterations of the proposed \ouralgorithm{} can be rewritten in the following form: 
\begin{align}\label{eq:aug_form}
{\bm z}_{k+1}  = 
\begin{bmatrix}
    I_n & \gamma I_n\\
    0_{n\times n} & (1-\gamma)I_n
\end{bmatrix} {\bm z}_{k} + \begin{bmatrix}
    R-I_n & 0_{n\times n}\\
    I_n-R & C-I_n
\end{bmatrix} \check{\bm z}_{k}.
\end{align}

Now, considering the quantization error we can define $\check{\bm z}_{k} \triangleq {\bm z}_k + {\bm e}_k$, to further obtain
\begin{align}\label{eq:aug_form_error}
    {\bm z}_{k+1} = (\Gamma + \Pi) {\bm z}_{k} + (\Pi - I_{2n}){\bm e}_k,
\end{align}
where $\Gamma,\Pi\in\mathbb{R}^{2n\times 2n}$, with
\begin{align}
    \Gamma = \begin{bmatrix}  0_{n\times n} & \gamma I_n \\  0_{n\times n} & - \gamma I_n \end{bmatrix}, \quad \text{and} \quad \Pi = \begin{bmatrix} R & 0_{n\times n} \\ I_n-R & C \end{bmatrix}.
\end{align}

\begin{theorem}\label{theorem:1}
    The \ouralgorithm{} algorithm in \eqref{eq:qac_az} achieves exact average consensus, \ie ${\bm x}_{k} \rightarrow x_{\text{ave}}\mathbf{1}$ and ${\bm s}_{k} \rightarrow \mathbf{0}$ as $k\rightarrow\infty$, with a properly chosen surplus gain $\gamma>0$ and sufficiently small zooming factor $\alpha>0$.
\end{theorem}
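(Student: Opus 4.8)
The plan is to split the dynamics of \eqref{eq:aug_form_error} into a conserved \emph{consensus mode} and a contracting \emph{disagreement mode}, and then to show that the dynamic framing drives the quantization error, and hence the disagreement, to zero. Writing $A:=\Gamma+\Pi$, I would first record the spectral structure of the error-free iteration $\bm z_{k+1}=A\bm z_k$. A direct computation using $R\mathbf{1}=\mathbf{1}$ and $\mathbf{1}^\top C=\mathbf{1}^\top$ gives $A[\mathbf{1}^\top,\mathbf{0}^\top]^\top=[\mathbf{1}^\top,\mathbf{0}^\top]^\top$ and $[\mathbf{1}^\top,\mathbf{1}^\top]A=[\mathbf{1}^\top,\mathbf{1}^\top]$, so $\lambda=1$ is an eigenvalue with the indicated right and left eigenvectors. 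At $\gamma=0$ the matrix $A=\Pi$ is block lower-triangular with diagonal blocks $R$ and $C$; since $\set{G}$ is strongly connected and carries self-loops, both blocks are primitive stochastic matrices, so $A|_{\gamma=0}$ has a semisimple eigenvalue $1$ of algebraic multiplicity two and every remaining eigenvalue strictly inside the unit disc. A first-order perturbation computation for this semisimple eigenvalue (equivalently, the surplus-consensus argument of \cite{cai2012average}) shows that for all sufficiently small $\gamma>0$ the double eigenvalue splits into the protected simple eigenvalue $1$, kept on the unit circle by the mass-conservation law of Lemma~\ref{lemma:1}, and a second eigenvalue that moves strictly inside the disc. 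Fixing such a $\gamma$, the eigenvalue $1$ is simple and $\rho(T)<1$, where $T$ denotes the restriction of $A$ to the $A$-invariant subspace $\set{W}:=\{\bm z\in\mathbb{R}^{2n}\,:\,[\mathbf{1}^\top,\mathbf{1}^\top]\bm z=0\}$, which is complementary to the consensus line $\mathrm{span}\{[\mathbf{1}^\top,\mathbf{0}^\top]^\top\}$.

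Next I would project \eqref{eq:aug_form_error} onto this splitting. The same identity that proves Lemma~\ref{lemma:1}, namely $[\mathbf{1}^\top,\mathbf{1}^\top](\Pi-I_{2n})=\mathbf{0}$, shows that the error input lies entirely in $\set{W}$; hence the consensus coordinate of $\bm z_k$ is frozen and, using $\bm s_0=\mathbf{0}$, equals $x_{\text{ave}}[\mathbf{1}^\top,\mathbf{0}^\top]^\top$ for all $k$. Writing $\tilde{\bm z}_k$ for the $\set{W}$-component, the theorem reduces to proving $\tilde{\bm z}_k\to\mathbf{0}$, where
\begin{align}\label{eq:disagreement-rec}
\tilde{\bm z}_{k+1}=T\tilde{\bm z}_k+(\Pi-I_{2n})\bm e_k ,
\end{align}
so it suffices to drive the quantization error $\bm e_k$ to zero while keeping $\tilde{\bm z}_k$ bounded.

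The core of the argument is the interaction between \eqref{eq:disagreement-rec} and the framing rules \eqref{eq:delta_update}, \eqref{eq:w_reset}, \eqref{eq:shifting}, which I would handle in two stages. \textbf{(i) No permanent saturation.} Whenever an agent leaves the quantizer range, its flag is set to $w=1$ in \eqref{eq:w_reset} and, by the max-consensus \eqref{eq:max_consensus_repeat}, reaches every agent within $\bar{D}$ steps, so all agents zoom out synchronously via \eqref{eq:delta_update}. Since $\rho(T)<1$, an input-to-state estimate for \eqref{eq:disagreement-rec} keeps $\tilde{\bm z}_k$ bounded, so the spread $M_k-\mu_k$ is bounded and only finitely many zoom-outs occur; after a finite time $k_0$ the midpoint $\sigma_k=\tfrac{1}{2}(M_k+\mu_k)$ brackets all states, no saturation occurs, and the uniform-quantizer bound $\|\bm e_k\|_\infty\le\Delta_k/2$ holds for all $k\ge k_0$. \textbf{(ii) Persistent zoom-in.} Once unsaturated, I would run a coupled induction over the blocks $k\in\set{K}_{\bar{D}}$ on the normalized spread $\eta_k:=(M_k-\mu_k)/\Delta_k$. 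Over one block the homogeneous contraction multiplies the spread by at most $\rho^{\bar{D}}$, while the accumulated error injects at most $c(\bar{D})\Delta_k$ for a constant $c(\bar{D})$ depending on $\|T\|$, $\bar{D}$, and $(\Pi-I_{2n})$; since $q_k=(2^{b-1}-\tfrac{1}{2})\Delta_k$, the zoom-in test $|x_{j,k}-\sigma_k|<q_k/(1+\alpha)$ is implied by $\eta_k<2(2^{b-1}-\tfrac{1}{2})/(1+\alpha)$. Choosing $\alpha>0$ small enough that $(1+\alpha)\big(\rho^{\bar{D}}\eta_k+c(\bar{D})\big)$ still meets this bound makes the invariant self-reproducing, so every synchronization instant triggers $\zeta_k=-1$ and $\Delta_k$ is divided by $(1+\alpha)$ each block, whence $\Delta_k\to0$ geometrically. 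Feeding $\|\bm e_k\|_\infty\le\Delta_k/2\to0$ back into \eqref{eq:disagreement-rec} with $\rho(T)<1$ yields $\tilde{\bm z}_k\to\mathbf{0}$, which together with the frozen consensus mode gives $\bm x_k\to x_{\text{ave}}\mathbf{1}$ and $\bm s_k\to\mathbf{0}$.

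I expect the main obstacle to be stage \textbf{(ii)}: making the coupled contraction genuinely self-sustaining. The difficulty is that zooming in shrinks the admissible range $q_k$ at exactly the geometric rate at which $\Delta_k$ shrinks, so one must verify that the intrinsic contraction $\rho^{\bar{D}}$ of the disagreement dynamics outruns both the simultaneous shrinkage of the frame and the re-injected quantization noise $c(\bar{D})\Delta_k$. Pinning down $c(\bar{D})$ and, crucially, translating the vector bound on $\tilde{\bm z}_k$ into the scalar spread $M_k-\mu_k$ that actually drives the framing logic is where the careful estimates lie; the explicit smallness condition on $\alpha$ in the statement is precisely what guarantees that the block-wise invariant survives.
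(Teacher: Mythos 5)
Your proposal is sound in outline but follows a genuinely different route from the paper's proof. The paper proceeds by cases on the error signal: first ${\bm e}_k=\mathbf{0}$ (reducing to \cite{cai2012average}), then a \emph{constant} error ${\bm e}_k={\bm e}$, for which it augments the state with the error, forms the $4n\times 4n$ matrix $\Omega$, and explicitly constructs $2n+1$ linearly independent eigenvectors to show the eigenvalue at one is semisimple, so $\Omega^k$ converges and the constant-input system settles to \emph{some} steady state; only at the very end does it invoke ${\bm e}_k\to 0$ and Lemma~\ref{lemma:1} to pin the limit at $x_{\text{ave}}$. You instead split $\mathbb{R}^{2n}$ along the left eigenvector $[\mathbf{1}^\top,\mathbf{1}^\top]$, observe that $[\mathbf{1}^\top,\mathbf{1}^\top](\Pi-I_{2n})=\mathbf{0}$ so the quantization error injects only into the contracting subspace $\set{W}$, freeze the consensus coordinate at $x_{\text{ave}}$, and finish with an input-to-state argument for a vanishing input. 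Your route buys two things: it identifies the limit as the average directly (rather than via a separate mass-preservation appeal), and it handles a \emph{time-varying} vanishing error cleanly, whereas the paper's key computation is carried out only for a constant error and the passage from "constant" to "bounded, time-varying, vanishing" is asserted rather than argued. What the paper's Jordan-form construction buys is an explicit description of the steady state reached under a persistent constant error, which your ISS bound does not provide. Be aware, however, that both treatments leave the same crucial step at sketch level: the claim that the framing mechanism eventually forces $\zeta_k=-1$ at every synchronization instant so that $\Delta_k\to 0$. The paper simply asserts this; you at least formulate the normalized-spread invariant $\eta_k=(M_k-\mu_k)/\Delta_k$ and locate where the smallness of $\alpha$ enters, but your self-reproduction inequality $(1+\alpha)\bigl(\rho^{\bar D}\eta_k+c(\bar D)\bigr)<\eta^*$ also implicitly requires the injected-noise constant $c(\bar D)$ to be dominated by $(1-\rho^{\bar D})\,2(2^{b-1}-\tfrac12)$, i.e., a sufficient bit budget $b$ (consistent with the reported failures at $b=2$), and your stage (i) boundedness argument is delicate because under saturation the quantization error is state-dependent, so the ISS estimate cannot be applied with an a priori bounded input. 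Neither issue is resolved in the paper's own proof either.
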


\begin{proof}\label{sec:appendix} 
The proof can be found in the appendix.
\end{proof}

\begin{theorem}[]\label{theorem:2}
	The \ouralgorithm{} algorithm in \eqref{eq:qac_az} achieves $\epsilon$-consensus to the average in some finite time $k^*\in\mathcal{K}_D$, \ie 
	\begin{align}
		\big| x_{j,k^*} - x_{\textnormal{ave}} \big| \leq \epsilon,\quad \forall v_j\in\mathcal{V},
	\end{align}
	 and synchronously stop their iterations if at iteration $k^*$ every agent's state satisfies
	\begin{align}
		M_{k^*} - \mu_{k^*} \leq \epsilon - \Delta_{k^*}.
	\end{align}	
\end{theorem}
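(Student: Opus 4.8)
The plan is to decompose the statement into three parts: (i) the test appearing in \textbf{Algorithm~\ref{alg:alg2}} is in fact a \emph{global, synchronized} test, not merely a local one; (ii) this test is eventually triggered, so that $k^*$ is finite; and (iii) once it is triggered, the $\epsilon$-accuracy bound holds. First I would establish that at every synchronization instant $k\in\mathcal{K}_{\bar D}$ the quantities entering the stopping rule are common to all agents. By Assumption~\ref{assum:1} we have $\bar D\ge D$, so the max/min-consensus recursions \eqref{eq:max-min-consensus}, reset to $M_{j,k}=\mu_{j,k}=\check{x}_{j,k}$, have propagated across the strongly connected digraph within the preceding $\bar D$ steps. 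Hence $M_{j,k}=M_k=\max_{v_i\in\mathcal{V}}\check{x}_{i,k}$ and $\mu_{j,k}=\mu_k=\min_{v_i\in\mathcal{V}}\check{x}_{i,k}$ at every node, and since $\Delta_k$ is produced by the identical rule \eqref{eq:delta_update} from common initial data (Assumptions~\ref{assum:2}--\ref{assum:3}), it is common as well. Therefore the scalar test $M_{j,k}-\mu_{j,k}\le\epsilon-\Delta_k$ returns the same truth value at the same instant at every agent, which is precisely the synchronous termination claimed.

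Second, I would prove that $k^*$ exists. By Theorem~\ref{theorem:1} the iterates satisfy $\bm{x}_k\to x_{\text{ave}}\mathbf{1}$ and $\bm{s}_k\to\mathbf{0}$, and the dynamic framing drives $\Delta_k\to0$ (this vanishing step size is exactly what lets the residual quantization error disappear and makes \emph{exact} consensus attainable with a fixed bit budget). Consequently the state spread $\max_j x_{j,k}-\min_j x_{j,k}\to0$, which together with the per-agent quantization bound used below forces $M_k-\mu_k\to0$, while the right-hand side $\epsilon-\Delta_k\to\epsilon>0$. Thus $M_k-\mu_k\le\epsilon-\Delta_k$ holds for all sufficiently large $k\in\mathcal{K}_{\bar D}$, and I take $k^*$ to be the first such synchronization instant.

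Third, for correctness I would invoke the uniform-quantizer error bound: whenever the input is within range, the rounding branch of \eqref{eq:quantizer} gives $|x_{j,k}-\check{x}_{j,k}|\le\Delta_k/2$, whence
\begin{align}
\mu_k-\tfrac{\Delta_k}{2}\;\le\;x_{j,k}\;\le\;M_k+\tfrac{\Delta_k}{2},\qquad \forall\,v_j\in\mathcal{V}.
\end{align}
Thus every true state lies in an interval of width $M_k-\mu_k+\Delta_k\le\epsilon$. It then suffices to place $x_{\text{ave}}$ in the same interval, since any point of a width-$\epsilon$ interval is within $\epsilon$ of every other point of it, giving $|x_{j,k^*}-x_{\text{ave}}|\le\epsilon$ at once. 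By Lemma~\ref{lemma:1}, $x_{\text{ave}}=\tfrac1n\mathbf{1}^\top(\bm{x}_{k^*}+\bm{s}_{k^*})=\tfrac1n\sum_j x_{j,k^*}+\tfrac1n\mathbf{1}^\top\bm{s}_{k^*}$.

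The main obstacle is this final placement step. The arithmetic mean $\tfrac1n\sum_j x_{j,k^*}$ automatically lies in $[\min_j x_{j,k^*},\max_j x_{j,k^*}]$, but the surplus correction $\tfrac1n\mathbf{1}^\top\bm{s}_{k^*}$ could in principle push $x_{\text{ave}}$ outside the width-$\epsilon$ band, and the stopping rule controls only the state spread. To close this gap I would use Theorem~\ref{theorem:1} quantitatively: the test can be satisfied only once the iterates are near their limit, so the surplus has already contracted. Concretely I would either (a) absorb a bound on $|\mathbf{1}^\top\bm{s}_{k^*}|/n$, obtained from the contraction estimate in the proof of Theorem~\ref{theorem:1}, into the slack budget $\epsilon$ alongside the quantization width $\Delta_k$; or (b) show that along the trajectory the surplus magnitude and the state spread vanish at comparable rates, so that at the first triggering instant the surplus term is already negligible relative to $\epsilon$ and $x_{\text{ave}}$ necessarily falls within the band. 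Establishing this coupling between the surplus magnitude and the detected state spread at the termination instant is the technically delicate part of the argument.
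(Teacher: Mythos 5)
Your proposal follows essentially the same route as the paper's proof: bound the true state spread by $(M_{k}-\mu_{k})+\Delta_{k}\le\epsilon$ using the per-agent quantization error $|x_{j,k}-\check{x}_{j,k}|\le\Delta_{k}/2$, argue that the test is common to all agents because $M_{j,k}$, $\mu_{j,k}$, and $\Delta_{k}$ are network-wide quantities at synchronization instants, and then place $x_{\text{ave}}$ inside the resulting width-$\epsilon$ interval via the mass-preservation Lemma~\ref{lemma:1}. You are in fact more thorough than the paper on two points the paper does not argue at all: that the stopping test returns the same truth value at every agent (the paper only asserts this), and that a finite triggering instant $k^{*}$ exists (the paper never establishes finiteness; you derive it from Theorem~\ref{theorem:1} together with $\Delta_{k}\to0$).

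The one step you leave open --- placing $x_{\text{ave}}$ in the band despite the surplus term $\tfrac{1}{n}\mathbf{1}^{\top}\bm{s}_{k^{*}}$ in the conserved quantity --- is a genuine issue, but you should know that the paper does not close it either: its proof simply writes ``if $x^{\min}_{k}\le x_{\text{ave}}\le x^{\max}_{k}$, then \dots'', i.e., it \emph{conditions} on exactly the containment that the surplus could violate, and never verifies it. Lemma~\ref{lemma:1} preserves $\mathbf{1}^{\top}(\bm{x}_{k}+\bm{s}_{k})$, not $\mathbf{1}^{\top}\bm{x}_{k}$, so $x_{\text{ave}}$ equals the arithmetic mean of the current states only up to $\tfrac{1}{n}\mathbf{1}^{\top}\bm{s}_{k}$; your identification of this as the delicate point is correct and is a strength of your write-up rather than a defect relative to the paper. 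To turn your sketch into a complete proof you would still need to carry out one of your options (a) or (b), e.g., by showing $\bm{s}_{k}\to\mathbf{0}$ (Theorem~\ref{theorem:1}) fast enough that $|\mathbf{1}^{\top}\bm{s}_{k^{*}}|/n$ can be absorbed into the slack, or by tightening the stopping threshold to $\epsilon-\Delta_{k}-\delta_{k}$ for a computable bound $\delta_{k}$ on the surplus contribution. One further minor caveat that applies to both your argument and the paper's: at a synchronization instant $k\in\mathcal{K}_{\bar D}$ the agreed values $M_{k},\mu_{k}$ are the extrema of the quantized states from the reset $\bar D$ steps earlier, not of $\check{x}_{j,k}$ itself, so a fully rigorous version must also control the drift of the states over one coordination window.
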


\begin{proof}
	First, let us introduce the following notation. Define $x^{\max}_k := \max_j x_{j,k}$ and $x^{\min}_k := \min_j x_{j,k}$. From the construction of the quantizer and the coordinated quantization midpoint shifting mechanism, we know that 
	$$
		\big| (x^{\max}_k - x^{\min}_k) - (M_k - \mu_k) \big| \leq \Delta_k,
	$$
	since $\check{x}_{i,k}-x_{i,k} \leq \Delta_k/2$. Thus, if $M_k - \mu_k \leq \epsilon - \Delta_k$, then 
	$$
		x^{\max}_k - x^{\min}_k \leq (M_k - \mu_k) + \Delta_k \leq \epsilon.
	$$
	From Lemma~\ref{lemma:1}, we know that the total mass in the network is preserved, so is the average. Hence, if $x^{\min}_k \leq x_{j,k} \leq x^{\max}_k$ for all $v_j\in\mathcal{V}$ and $x^{\min}_k \leq x_{\text{ave}} \leq x^{\max}_k$, then
	$$
	|x_{j,k} - x_{\text{ave}}| \leq x^{\max}_k - x^{\min}_k \leq \epsilon,
	$$
	for all $v_j\in\mathcal{V}$, which states that every agent's value is within $\epsilon$ of the average consensus value. Thus, all agents stop iterating and freeze their states at $k^*$, since $\Delta_k$ is the same for all agents, and thus the network achieves $\epsilon$-consensus to the average. This completes the proof.
\end{proof}

\section{Simulation Results}
\label{sec:simulations}

In this section, we illustrate the behavior of the proposed distributed algorithm and we showcase its performance. Throughout the simulations, we consider a fixed directional network of $n=5$ agents, which we model as a directed graph $\set{G}=(\set{V},\set{E})$, of diameter $D=4$. Under this setup, each agent $v_j\in\set{V}$ knows an upper bound on the diameter of the network, ${\bar{D}}=4$, and initializes its local variables with $\Delta_0=0$, $\sigma_0=0$, $s_{j,k}=0$, $\gamma=0.2$, and $\alpha\in\{0.2,4\}$. Note that, we omit the agents' index $j$ for the variables that are always the same between agents due to either the coordination mechanisms we described in the previous sections, \ie $\Delta_k$ and $\sigma_k,$ and based on the assumption that agents have global information of the parameters $\alpha, \gamma,$ and ${\bar{D}}$, apriori. The initial state variable of agent $v_j$, $x_{j,0}$, is assumed to be randomly initialized from a uniform distribution in the interval $[0,1000]$. In general, to avoid saturation and ensure full coverage of the initial values that $x_{j,0}$ can take between the specified interval, then, agents need at least $10$ bits, \ie $2^{10}>1000$. For a lower number of bits it is not guaranteed that values will be representable, and hence, some agents may experience saturation on their quantized values.  Although the methods in the literature proposed solutions where the number of bits for quantization is enough to represent the information that is to be exchanged, the \ouralgorithm{} algorithm of this work can handle cases where the number of quantization bits are not enough to fully represent a real value, in the expense of more communication rounds. In what follows, we illustrate the evolution of the values of agents in the network, and we show the convergence speed of the agents for different settings. 

\subsection{Asymptotic case}
In Fig.~\ref{fig:alpha_small} and Fig.~\ref{fig:alpha_large} we present the evolution of the values of the agents executing the proposed algorithm, over the network and initial values we described before, and we examine the average consensus error captured by $\lVert {\bm x}_k - \mathbf{1}x_{\text{ave}}\rVert_2$, for $\alpha=0.2$ and $\alpha=4$, respectively. As shown in these figures, agents converge faster to the average consensus value when $\alpha=5$ rather than $\alpha=1.2$. At the first rounds of communication for $b\in\{3,8\}$, agents increase their quantization step sizes $\Delta_k$ every ${\bar{D}}=4$ iterations, since the values of their state variables $x_{j,k}$ cannot be fully represented by $b\in\{3,8\}$ bits. Hence, they increase their step size $\Delta_k$, and shift their midpoint $\sigma_k$ in a distributed and coordinated manner, up to the point where the values they want to send to their out-neighbors are within their quantization range. Clearly, for $b=24$, agents start by zooming-in their quantizers (by decreasing the quantization step size) since with $24$ bits, all their values can be represented adequately. Notice that, for a relatively lower number of bits and higher value of $\alpha$, the quantizers' step size fluctuates since the zooming factor $\alpha$ is too big for the given number of bits.  

\begin{figure}
    \centering
    \includegraphics[width=0.5\linewidth]{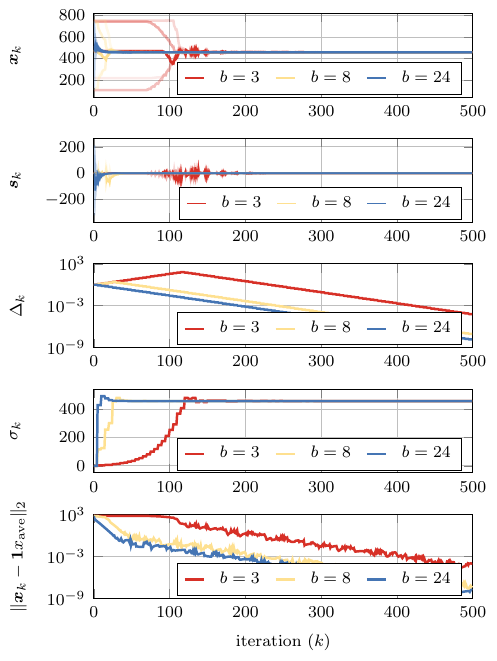}
    \vspace{-10pt}
    \caption{Evolution of the \ouralgorithm{} algorithm over a directed network of $n=5$ agents with $D=4$ and $\alpha=0.2$.}
    \label{fig:alpha_small}
\end{figure}

\begin{figure}
    \centering
    \includegraphics[width=0.5\linewidth]{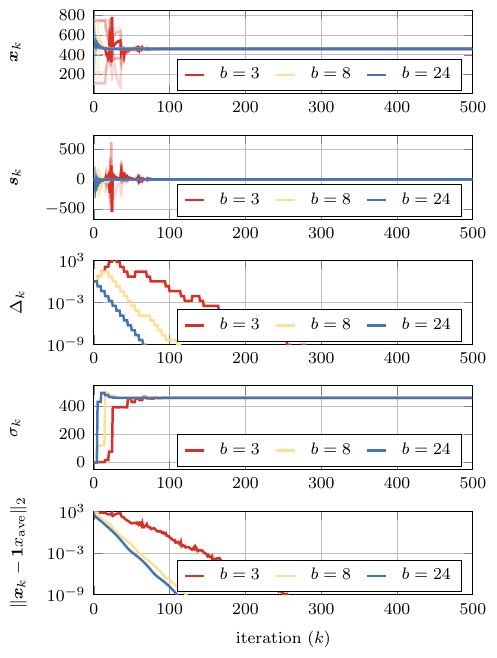}
    \vspace{-10pt}
    \caption{Evolution of the \ouralgorithm{} algorithm over a directed network of $n=5$ agents with $D=4$ and $
    \alpha=4$.}
    \label{fig:alpha_large}
\end{figure}

We now analyze the performance of the \ouralgorithm{}  algorithm in terms of average convergence time and the number of communication/quantization bits used. Specifically, we conduct $50$ Monte Carlo simulations on two different sizes of networks $n\in\{5,20\}$ with varying connectivity. The differences in connectivity correspond to variations in the network's diameter $D$, which is assumed to be known by all agents. To capture the convergence speed of the \ouralgorithm{} algorithm, we consider such networks with average diameter $\hat{D}=3.6$ (for $n=5$) and $\hat{D}=7.66$ (for $n=20$), as shown in the left and right plot of Fig.~\ref{fig:bits_convergence_net1}, respectively. In these simulations, we try different quantizer zooming factors $\alpha\in\{0.2,0.3,0.4,0.6,1.0,4.0\}$ and quantization bits $b\in\{2,4,6,8,10,12,14,16\}$ for all agents, and record the average convergence time, \ie $\min\big\{k \mid \lvert x_{i,k}-x_{j,k}\rvert\leq 10^{-8}, \forall v_j, v_i \in \mathcal{V}\big\}$.

The average convergence rate of the networks with average diameter $\hat{D}=3.6$ is significantly faster than that of networks with $\hat{D}=7.66$, primarily due to the more frequent adjustments in quantization step size and midpoint shifting. However, notice that, selecting a higher quantization zooming factor, such as $\alpha=4$, leads to the saturation of the \ouralgorithm{} algorithm, preventing further improvement in quantizer precision. This saturation arises from excessive fluctuations in the quantization step size, as shown in Fig.~\ref{fig:alpha_small} and Fig.~\ref{fig:alpha_large}, which inhibit the zooming mechanism from effectively reducing it. Thus, in these cases the algorithm cannot reach $\lvert x_{i,k}-x_{j,k}\rvert\leq 10^{-8}, \forall v_j, v_i \in \mathcal{V}$, and hence the average convergence time is not shown in Fig.~\ref{fig:bits_convergence_net1} for some instances of $b=2$ number of bits. Observe that, by increasing the number of bits for communication, the convergence time of the \ouralgorithm{} algorithm decreases significantly, especially with smaller zooming factor $\alpha$. 

\begin{figure}[t]
    \centering
    \includegraphics[width=0.6\linewidth]{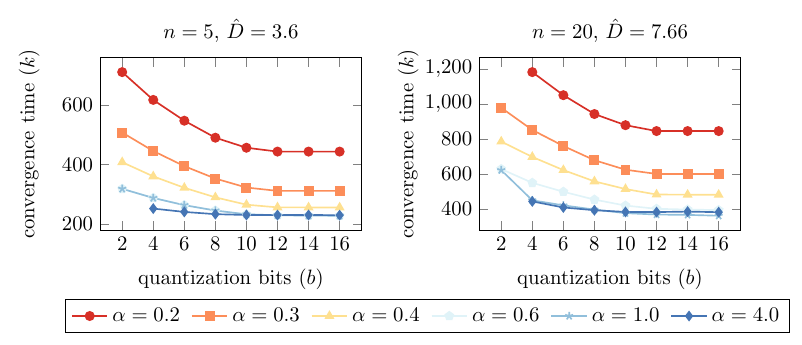}
    \vspace{0pt}
    \caption{Average convergence time of the \ouralgorithm{} algorithm over 50 Monte Carlo simulations, \ie $\min\{k \mid \lvert x_{i,k}-x_{j,k}\rvert\leq 10^{-8}, \forall v_j, v_i \in \mathcal{V}\}$ with $\gamma=0.1$. Left: $5$-agent directed networks with average diameter $\hat{D}=3.6$. Right: $20$-agent directed networks with average diameter $\hat{D}=7.66$.}
    \label{fig:bits_convergence_net1}
\end{figure}

\subsection{Finite-time $\epsilon$-consensus case}
We now present numerical results that validate the performance of the \ouralgorithm{} algorithm when applied to the $\epsilon$-convergence average consensus problem with distributed stopping. In particular, we examine how the quantization parameters, \ie the number of bits $b$ and the zooming factor $\alpha$, influence both the convergence time and the overall communication load, providing us more insights on the interplay between quantizer adaptation, midpoint synchronization, and the distributed stopping mechanism over directed networks.

Fig.~\ref{fig:monte_carlo1} illustrates the convergence behavior of the \ouralgorithm{} algorithm over $50$ Monte Carlo simulations performed over a directed network of $n=5$ agents with average diameter $\hat{D}=3.6$. Here, we investigate how both the number of quantization bits $b\in\{2,4,6,8,10,12,14,16\}$ and the zooming factor $\alpha\in\{0.2,0.3,0.4,0.6,1,4\}$ affect the convergence time under two levels on the average consensus convergence accuracy, \ie $\epsilon=10^{-2}$ and $\epsilon=10^{-6}$, shown in the left and right subplots of Fig.~\ref{fig:bits_convergence_net1}, respectively. In both accuracy levels, the convergence time decreases noticeably as the number of quantization bits increases, with the improvement being particularly significant for smaller values of $\alpha$. In these cases, the step-size adaptation remains stable, allowing the quantizer to zoom in without fluctuations, while for larger zooming factors, \eg $\alpha=4$, the aforementioned oscillatory behavior in the quantization step-size is induced, often triggering repeated zoom-outs and intermittent saturation. These fluctuations hinder the shrinking of the quantization step and slow down the dynamics, which in several instances prevents the algorithm from reaching the required accuracy (\eg for $b=2$ and $\alpha=4$ the corresponding convergence times are absent in Fig.~\ref{fig:monte_carlo1} due to instability). 

Fig.~\ref{fig:monte_carlo2} depicts the (average over $50$ Monte Carlo simulations) total number of transmitted bits required to reach the aforementioned accuracy levels, \ie $\epsilon=10^{-2}$ and $\epsilon=10^{-6}$. Here, the communication cost grows with the number of quantization bits since each transmitted message carries more information. For small zooming factors the algorithm converges substantially faster, resulting in markedly fewer total transmitted bits compared to larger $\alpha$ values. However, similarly to the convergence time behavior, aggressive zooming (\eg $\alpha=4$) leads to oscillatory behavior in the quantization step-size, causing a significant increase in communication load, especially under stricter tolerance $\epsilon=10^{-6}$ that would require more iterations for convergence.


\begin{figure}
	\centering
	\includegraphics[width=0.6\linewidth]{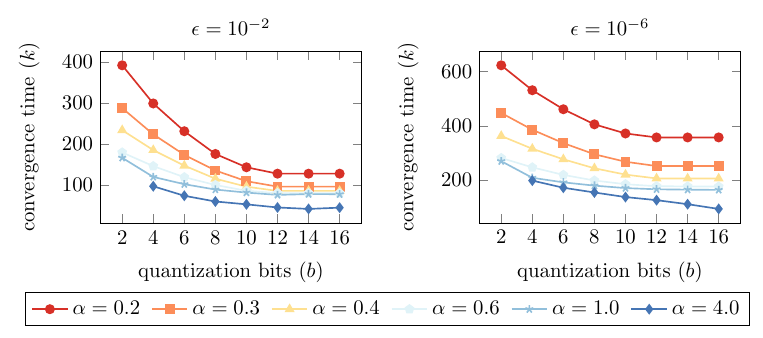}
	\vspace{0pt}
	\caption{Convergence time of the \ouralgorithm{} algorithm with distributed stopping at $\epsilon$-consensus over 50 Monte Carlo simulations in a $5$-agent directed network with average diameter $\hat{D}=3.6$ and $\gamma=0.1$. Left: $\epsilon=10^{-2}$. Right: $\epsilon=10^{-6}$.}
	\label{fig:monte_carlo1}
\end{figure}

\begin{figure}
	\centering
	\includegraphics[width=0.6\linewidth]{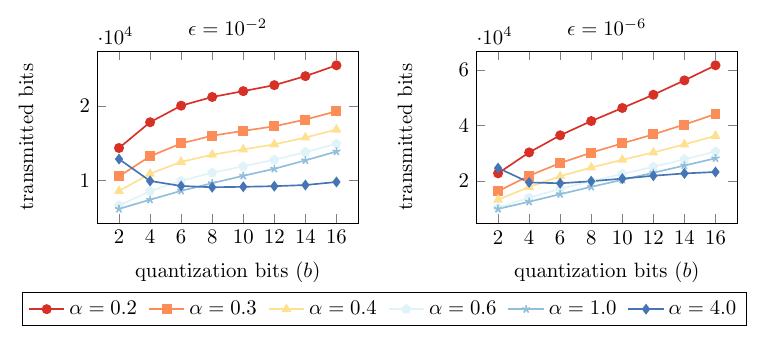}
	\vspace{0pt}
	\caption{Number of transmitted bits using the \ouralgorithm{} algorithm with distributed stopping at $\epsilon$-consensus over 50 Monte Carlo simulations in a $5$-agent directed network with average diameter $\hat{D}=3.6$ and $\gamma=0.1$. Left: $\epsilon=10^{-2}$. Right: $\epsilon=10^{-6}$.}
	\label{fig:monte_carlo2}
\end{figure}

\section{Conclusions and Future Directions}
In this paper, we introduced a deterministic distributed algorithm, referred to as {\ouralgorithm}, for achieving exact average consensus over possibly unbalanced directed graphs using only finite-bit communication. Unlike existing quantized consensus algorithms for unbalanced digraphs, which rely on probabilistic convergence arguments and typically incur non-vanishing quantization errors, our method guarantees deterministic convergence to the exact average while operating with a a fixed and \emph{a~priori} given number of quantization bits. The algorithm integrates Push–Pull (surplus) consensus dynamics with a dynamic quantization framing mechanism that combines zooming and midpoint shifting, enabling agents to preserve the true global average while progressively refining quantization precision so that the consensus error vanishes asymptotically. Beyond asymptotic correctness on any strongly connected digraph, we established a provably sound finite-time termination rule that allows all agents to detect when their states have entered an $\epsilon$-neighborhood of the true average and to stop iterating simultaneously using only local information, ensuring synchronous termination under tight communication constraints. Numerical simulations support the theoretical findings and demonstrate that {\ouralgorithm} achieves reliable and exact performance even under very tight communication bit budgets, making it a promising approach for resource-constrained multi-agent systems operating over directed networks.


Despite the promising results, several avenues remain open for future research. A promising direction is to extend the proposed framework to even more realistic communication settings involving packet drops and time-varying delays, where mass preservation and quantization rules becomes significantly more challenging. Another important direction is to remove the reliance of the agents on the knowledge of the network diameter, enabling fully distributed quantized average consensus without global parameters. Finally, developing mechanisms for asynchronous operation of the algorithm without requiring synchronized updates, would broaden its applicability to practical large-scale and heterogeneous networks.

\appendix\label{sec:appendix}
\subsection*{Proof of Theorem~\ref{theorem:1}:}

In what follows, we analyze the properties of the iteration in \eqref{eq:aug_form_error} based on the quantization error vector ${\bm e}_k$. 

First, consider the case for which ${\bm e}_k = \mathbf{0}~\forall k\in\mathbb{Z}_{\geq 0}$, i.e.,
\begin{align}\label{eq:aug_form_error_app1}
    {\bm z}_{k+1} = (\Gamma + \Pi) {\bm z}_{k},
\end{align}
Since the quantization error is always $0$, the iteration in \eqref{eq:aug_form_error} reduces to the algorithm in \cite{cai2012average} and converges to ${\bm z}^\star = [\mathbf{1}^{\top}x_{\text{ave}}, \mathbf{0}_n^{\top}]^{\top}$ as $k\rightarrow\infty$ \cite[Proposition~6]{cai2012average}.  
    
Next, we consider the case for which ${\bm e}_k = {\bm e} \neq \mathbf{0}~\forall k\in\mathbb{Z}_{\geq 0}$, i.e., 
\begin{align}\label{eq:aug_form_error_app2_1} 
    {\bm z}_{k+1} &= (\Gamma + \Pi) {\bm z}_{k} + (\Pi -I_{2n}){\bm e}. 
\end{align}
For this case, we start by analyzing the matrix $(\Gamma+\Pi)$. 
Defining ${\bm y}_k=[{\bm z}_{k}^\top,{\bm w}_k^\top]^\top$ with ${\bm w}_0={\bm e}$, the system can equivalently be expressed as
\begin{align}
    {\bm y}_{k+1} &= \underbrace{\begin{bmatrix}
        \Gamma+\Pi&\Pi-I_{2n}\\
        0_{2n\times 2n}&I_{2n}
    \end{bmatrix}}_{\Omega} {\bm y}_{k}. \label{eq:augmented_state_dynamics } 
\end{align}
We observe that, being upper block-triangular with the lower diagonal block that corresponds to the $2n\times 2n$ identity matrix, $\Omega\in\mathbb{R}^{2n\times 2n}$ has at least $2n$ eigenvalues equal to one.
Moreover, $\Gamma+\Pi$ has one simple eigenvalue equal to one, while all other eigenvalues are of magnitude less than one \cite{cai2012average}. Therefore, $\Omega$ has $2n+1$ eigenvalues equal to one.
At this point, in order to prove convergence, we claim that the geometric multiplicity of the eigenvalue in one is $2n+1$. To this end, let us characterize the structure of the $2n+1$ linearly independent eigenvectors of $\Omega$ associated to the eigenvalue in one. Specifically, let us consider vectors in the form 
$$
{\bm w}_i=\begin{bmatrix}
    {\bm \eta}_i^{\top},&{\bm 0}_n^\top,&-{\bm \eta}_i^{\top},&{\bm 0}_n^\top
\end{bmatrix}^\top, \quad i\in\{1,\ldots,n\},
$$
where ${\bm \eta}_i$ is the $i$-th vector of any basis of $\mathbb{R}^n$, hence the matrix $E\in\mathbb{R}^{n\times n}$ that has the vectors ${\bm \eta}_i$ as its columns is nonsingular.
We have that
$$
(\Gamma+\Pi)\begin{bmatrix}{\bm \eta}_i\\{\bm 0}_n \end{bmatrix}-(\Pi-I_{2n})\begin{bmatrix}{\bm \eta}_i\\{\bm 0}_n\end{bmatrix}=
(\Gamma+I)\begin{bmatrix}{\bm \eta}_i\\{\bm 0}_n \end{bmatrix}=\begin{bmatrix}{\bm \eta}_i\\{\bm 0}_n \end{bmatrix},
$$
therefore $\Omega {\bm w}_i= {\bm w}_i$ are all eigenvectors of $\Omega$ associated to the eigenvalue in one.

Let us now consider vectors in the form 
$$
{\bm w}_{n+i}=\begin{bmatrix}
    {\bm v}_I^{*\top}-{\bm \eta}_i^{\top},&{\bm 0}_n^\top,&{\bm \eta}_i^{\top},&{\bm v}_{II}^{*\top}
\end{bmatrix}^\top, \quad i\in\{1,\ldots,n\},
$$
where the vectors ${\bm \eta}_i$ are the same as above and
$$
{\bm v}^*=\begin{bmatrix}
    {\bm v}_I^{*\top},&{\bm v}_{II}^{*\top}
\end{bmatrix}^\top,
$$
with ${\bm v}_I^{*}$ being the dominant eigenvector of $R$ while ${\bm v}_{II}^{*}$ is the dominant eigenvector of $C$.
Notice that
$$
\Pi {\bm v}^*= \begin{bmatrix}
    R {\bm v}_I^{*}\\ (I_{n}-R){\bm v}_I^{*}+C{\bm v}_{II}^{*}
\end{bmatrix}=\begin{bmatrix}
    {\bm v}_I^{*}\\ {\bm v}_{II}^{*}
\end{bmatrix},
$$
hence ${\bm v}^*$ is an eigenvector of $\Pi$ associated to the eigenvalue one. At this point, we observe that
$$
\begin{aligned}
 (\Gamma+\Pi)\begin{bmatrix}{\bm v}_I^{*}-{\bm \eta}_i\\{\bm 0}_n \end{bmatrix}+(\Pi-I_{2n})\begin{bmatrix}{\bm \eta}_i\\{\bm v}_{II}^{*}\end{bmatrix}&=
\Pi\begin{bmatrix}{\bm v}_I^{*}\\{\bm v}_{II}^{*} \end{bmatrix}-\begin{bmatrix}{\bm \eta}_i\\{\bm v}_{II}^{*}\end{bmatrix}\\
&=\begin{bmatrix}{\bm v}_I^{*}-{\bm \eta}_i\\{\bm 0}_n \end{bmatrix}.   
\end{aligned}
$$
Therefore, $\Omega {\bm w}_{n+i}= {\bm w}_{n+i}$ and thus all vectors ${\bm w}_{n+i}$ are eigenvectors of $\Omega$ associated to the eigenvalue in one.

Finally, let us consider  ${\bm w}_{2n+1}=\begin{bmatrix}
    {\bm v}
^{\dag\top} & {\bm 0}_{2n}^\top\end{bmatrix}^\top$, with ${\bm v}
^{\dag}=\begin{bmatrix}{\bm v}
^{\dag\top}_I&
    {\bm v}
^{\dag\top}_{II}
\end{bmatrix}^\top$ the eigenvector associated to the eigenvalue in one of $\Gamma+\Pi$. 
In a compact form, we have that
$$
\begin{bmatrix}{\bm w}_{1},\ldots,{\bm w}_{2n+1}\end{bmatrix}=\begin{bmatrix}
E & {\bm v}_{II}^{*}\otimes {\bm 1}_n^\top -E&{\bm v}
^{\dag}_{I}\\
0_{n\times n}&0_{n\times n}&{\bm v}
^{\dag}_{II}\\
-E & E&{\bm 0}_n\\
0_{n\times n}&{\bm v}_{II}^{*}\otimes {\bm 1}_n^\top&{\bm 0}_n\end{bmatrix}.
$$
Notice that, $E$ is full rank and, being the dominant eigenvector of $C$ and $\Gamma+\Pi$, we have that ${\bm v}_{II}^{*}\neq {\bm 0}_n$ and ${\bm v}
^{\dag}\neq {\bm 0}_{2n}$, respectively. This is sufficient to conclude that the rank of $\begin{bmatrix}{\bm w}_{1},\ldots,{\bm w}_{2n+1}\end{bmatrix}\in\mathbb{R}^{4n \times (2n+1)}$ is $2n+1$. 
In other words ${\bm w}_{1},\ldots,{\bm w}_{2n+1}$ are all linearly independent eigenvectors of $\Omega$ associated to the eigenvalue in one. Hence, the algebraic and geometric multiplicity of the eigenvalue of $\Omega$ in one coincide and are equal to $2n+1$.

At this point, let us consider the matrix $T\in\mathbb{R}^{4n\times 4n}$ that puts $\Omega$ in Jordan Normal Form. In particular, let us consider a matrix $T$ that features ${\bm w}_1,\ldots,{\bm w}_{2n+1}$ as its last $2n+1$ columns, while the first $2n-1$ columns feature vectors in the form $\begin{bmatrix}
    {\bm v}_i^{\ddag\top}&{\bm 0}_n^\top
\end{bmatrix}^\top$, where the vectors ${\bm v}_i\ddag$ are eigenvector of $\Gamma+\Pi$  associated to eigenvalues with magnitude smaller than one, if  the eigenvalue has coinciding algebraic and geometric multiplicities, or generalized eigenvectors, otherwise.
We have that
$$
\Omega=T\begin{bmatrix}
        \widetilde{\Omega}&0\\
        0&I_{2n+1}
    \end{bmatrix}T^{-1}
$$
with $\widetilde{\Omega}\in\mathbb{R}^{(2n-1)\times (2n-1)}$ having just eigenvalues with magnitude strictly smaller than one (i.e., the eigenvalues in $\Gamma+\Pi$, without the eigenvalue in one). Notice that, since the geometric and algebraic multiplicity of the eigenvalue in one coincide, the Jordan normal form features $2n+1$ scalar diagonal blocks equal to one, i.e., it features a diagonal block $I_{2n+1}$. 
As a consequence, we have that
\begin{align*}
 \lim_{k\to \infty}\Omega^k &= T\begin{bmatrix}
        \lim_{k\to \infty} \widetilde{\Omega}^k&0\\
        0&I_{2n+1}
    \end{bmatrix}T^{-1} \\[0.2cm]
    &= T\begin{bmatrix}
        0&0\\
        0&I_{2n+1}
    \end{bmatrix}T^{-1}   ,
\end{align*}
and thus ${\bm y}_k=\Omega^k{\bm y}_0$ converges to a steady state, and so does ${\bm z}_k$, and the geometric multiplicity of the eigenvalue in one is at least $n$.


Finally, we consider the case for which ${\bm e}_k = \check{\bm z}_k - {\bm z}_k~\forall k\in\mathbb{Z}_{\geq 0}$. From \eqref{eq:quantizer}, we can infer that by properly adjusting the quantizers' midpoint $\sigma_k$ and step size $\Delta_k$, then by the construction of the quantizers we can always keep ${\bm e}_k$ bounded within $\Delta_k/2$. For a bounded error, iteration still converges according to the discussion in the paragraph above. Nevertheless, by having $\Delta_k$ decreasing according to the coordination variable $\zeta_k$, and by shifting the midpoint $\sigma_k$ as in \eqref{eq:shifting} to track the actual average consensus value, then ${\bm e}_k$ is driven towards $0$ as $k\rightarrow\infty$. Then by the preservation of the total mass in the network at all times $k\in\mathbb{Z}_{\geq 0}$ from Lemma~\ref{lemma:1} 
we have that the iteration in \eqref{eq:aug_form_error} converges to ${{\bm z}^\star} = [\mathbf{1}^{\top}x_{\text{ave}}, \mathbf{0}_n^{\top}]^{\top}$ as $k\rightarrow\infty$.

\bibliographystyle{IEEEtran}
\bibliography{references}

\end{document}